\newtheorem{lem}{Lemma}
\newtheorem{theorem}{Theorem}
\newtheorem{defn}{Definition}
\newtheorem{rem}{Remark}
\newtheorem{ass}{Assumption}
\def\mc{\mathcal}
\begin{document}

\title{Finite-Time Stability Under Denial of Service}

\author{Mohammadreza Doostmohammadian, and Nader Meskin, \textit{Senior member, IEEE}
	\thanks{Mohammadreza Doostmohammadian is with the Faculty of Mechanical Engineering at Semnan University, Semnan, Iran, email: \texttt{doost@semnan.ac.ir}. Nader Meskin is with the Electrical Engineering Department at Qatar University, Doha, Qatar, email: \texttt{nader.meskin@qu.edu.qa}.}
\thanks{This paper was supported  by the Qatar National Research Fund (a member of the Qatar Foundation) under the NPRP Grant number NPRP10-0105-170107.}	
	}
\maketitle

\begin{abstract}
	Finite-time stability of networked control systems under Denial of Service (DoS) attacks are investigated in this paper, where the communication between the plant and the controller is compromised at some time intervals.  Toward this goal,  first an event-triggered mechanism based on the variation rate of the  Lyapunov function is proposed such that  the closed-loop system remains finite-time  stable (FTS) and at the same time, the amount data exchange in the network is reduced. Next, the vulnerability of the proposed event-triggered finite-time controller in the presence of DoS attacks are evaluated  and sufficient conditions on the DoS duration and frequency are obtained to assure the finite-time stability of the closed-loop system in the presence of DoS attack where no assumption on  the DoS attack in terms of following a certain probabilistic  or a well-structured periodic model is considered. Finally, the efficiency of the proposed approach is demonstrated through a simulation study.

	\textit{Keywords:} Finite-time convergence, Input-to-state stability, Homogeneous systems, Denial-of-Service, Networked control systems, Event-triggered control
	
\end{abstract}

\section{Introduction}\label{sec_introduction}
\IEEEPARstart{I}{ndustrial}
control systems are typically equipped with information-sharing and communication  facilities for the transmission of measurement and control data. However, wireless networks and the Internet, as the key components of such control systems, are prone to disruption and cyber attacks. This is more challenging in  large-scale networked control systems and, particularly, in recent emerging context of Cyber-Physical-Systems (CPS) \cite{isj_cyber} and Internet-of-Things (IoT). In this direction, cyber-security is a recent topic of interest in the literature \cite{wood2002denial,xu2006jamming,asilomar14,khaitan2014design,chen2016dynamic,doostmohammadian2017distributed,cetinkaya2019overview,yadegar2019output,cardenas2008secure}, where different attack detection/mitigation strategies along with resilient control approaches are evaluated to  guarantee a safe and secure operation of the closed-loop system despite the presence
of failure, disruption of service, or possible malicious attacks.

A Denial-of-Service attack (DoS attack) refers to a type of cyber-attack in which the attacker aims to make  network resources unavailable to users by disrupting the services of a host connected to the Internet or the network. DoS is typically made by flooding the targeted channel or resource with redundant requests to overload the system and prevent  authorized requests to be accomplished.
Similarly, in a Distributed Denial-of-Service attack (DDoS attack), the incoming traffic flooding the targeted resource comes from many different sources, including targeted online password guessing \cite{wang2016targeted}. This effectively makes it impossible to stop the attack simply by blocking a single source. In networked control systems, DoS and DDoS have two similar consequences on the data transmission between the controller and the plant, namely  long delay jitters and large amount of packet loss \cite{BEITOLLAHI2011107} and the only difference between DoS and DDoS is on how they are deployed by the attacker where in DoS attack only one computer and one internet connection is used to flood the targeted system while in DDoS multiple computers and internet connections are utilized for attacking the targeted system. However, in terms of their ultimate effects on networked control systems, they have similar behavior.

\textit{Motivation:} According to \cite{cardenas2008secure}, security of  CPS is a critical issue which differs from general computing systems in the sense that any cyber-attack including DoS may cause disruption in the underlying physical system.  For example, in this paper, we consider sampled-data control systems in which  the plant-controller communication channel is subject to DoS (or DDoS). This may cause instability in the closed-loop systems due to denied communication on  the control input  channel (controller to actuator) and the measurement channel (sensor to controller), and in turn, it may result in critical damages to the physical system. While some works  assume the DoS attack  follows a probability distribution \cite{schenato2007foundations}, here we are concerned with the deterministic conditions under which the closed-loop finite-time stability is preserved.

The main motivation of this paper is to investigate the vulnerability of finite-time stability of networked control systems and to obtain sufficient conditions on the frequency and duration of DoS/DDoS attack such that the closed-loop networked control system remains finite-time stable. It should be noted that the proposed approach/framework is not considered as a mitigation solution and DoS/DDoS mitigation is mainly addressed by designing a secure architecture to 	protect a given networked control system from DoS/DDoS attack which is mainly a computer network problem handled by IT/computer engineering experts.  However, in this paper, we are considering the problem from the control engineering perspective and the worst case scenario is analyzed in which the attacker has successfully  launched a DoS/DDoS attack on a networked control system and we mainly investigate the attack effects on the finite-time stability of the closed-loop system. 

\textit{Literature review:} DoS attacks within the framework of \textit{linear} systems under state-feedback is considered in \cite{de2015input} where  a sampling strategy is proposed to ensure the \textit{exponential} input-to-state stability in the presence of DoS. Their approach adopts \textit{event-triggered} sampling methods, as in \cite{tabuada2007event,abdelrahim2015stabilization}, that properly constrain the closed-loop trajectories to assure the closed-loop stability. Such event-triggered control scenarios are prevalent in literature, e.g. see \cite{dolk2016event,foroush2012event,de2016networked,shisheh2016triggering,chen2018event,girard2014dynamic,al2019improved}, to account for limited network resources such as limited bandwidth in wireless networks.  In \cite{dolk2016event},  output-based resilient design conditions are developed such that the resulting closed-loop nonlinear Lipschitz  system is input-to-output stable, where in \cite{de2016networked} global exponential stability of networked control systems under DoS is considered.
Triggering control techniques to ensure asymptotic stability of linear systems under  well-structured \textit{periodic} \cite{foroush2012event}, energy-constrained \cite{shisheh2016triggering}, and \textit{Pulse-Width Modulated} (PWM) DoS attacks \cite{chen2018event} are also addressed in the literature.
Game-theoretic approaches assuming an intelligent jammer are also considered in \cite{gupta2010optimal,gupta2012dynamic} where in \cite{gupta2010optimal} a threshold-strategy is addressed, while the jamming attack occurs whenever the system state is larger than a specific threshold. In \cite{gupta2012dynamic}, an energy-constrained jamming scenario based on the full knowledge of the system state is considered.


\textit{Contribution:} This paper aims to characterize the duration and frequency of the DoS attack under  which the closed-loop system remains finite-time stable. Unlike \cite{cetinkaya2018probabilistic,cetinkaya2018analysis} in which  a probabilistic packet drop model is considered for the DoS attack and similar to  \cite{de2016networked,de2015input,dolk2016event}, no restricting assumption on the attack strategy is considered here. The main contribution of this work is to relate the finite-time stability properties  to the duration and frequency of DoS on/off transitions. As compared to asymptotic stability of linear \cite{de2015input,foroush2012event,shisheh2016triggering,chen2018event} and Lipschitz nonlinear \cite{de2016networked,dolk2016event} systems, for finite-time stability the underlying nonlinear system needs to be non-Lipschitz at the  origin (or equilibrium point) \cite{bhat2000finite,scientia,single_bit}. Such finite-time protocols, initially introduced in optimal control literature \cite{ryan1982optimal}, are of interest due to  reducing the response time \cite{haimo1986finite} and forcing the system to reach the desired target in finite-time \cite{hong2010finite}.

In this paper, using the results governing the finite-time input-to-state stable (FTISS) systems and adopting an event-triggered mechanism that suitably constrains the system trajectories, a Lyapunov-based analysis is developed to assure the finite-time stability under DoS attack. Particularly, we derive the  bound on DoS on/off transitions such that the stability during the \textit{off-periods} of DoS dominates the instability during the \textit{on-periods} of DoS, where during the on-periods of DoS the open-loop system evolves using the most recent transmitted control signal. There is no constraint on the control input during the off-periods of DoS and any type of state-feedback control design, e.g. robust control, can be considered. Note that the event-triggered mechanism is designed in accordance with the variation rate of the Lyapunov function and managing the sampling rate accordingly ensures the resiliency of our method, and further, allows for sufficiently flexible design to account for communication resources. Finally, the performance of the proposed approach is demonstrated and compared with a relevant work in the literature through a simulation study.

\section{The Framework} \label{sec_prob}
\subsection{Finite-time Input-to-State Stability}
Consider a nonlinear system in the form:
\begin{eqnarray} \label{eq_syst}
\dot{x}(t) =  f(x(t), u(t)),~x(0)=x_0,
\end{eqnarray}
where  $x(t) \in \mathbb{R}^n$, and $u(t) \in \mathbb{R}^m$ represent the system state and the system input, respectively. Considering a closed-loop feedback controller $u(t) = \psi(x(t),e(t))$ where $e(t)$ represent the closed-loop error signal due sampling,  one can rewrite the closed-loop system as:
\begin{eqnarray} \label{eq_syst2}
\dot{x}(t) =  f(x(t), {\psi}(x(t),e(t)))=F(x(t),e(t)),~x(0)=x_0.
\end{eqnarray}
First, we formally define the finite-time stability  and finite-time input-to-state stability  properties of the closed-loop system and the required conditions in terms of Lyapunov stability.

\begin{defn}
	The equilibrium $x=0$ of system \eqref{eq_syst2} with $e(t)=0$ is finite-time stable (locally) if it is locally stable in the Lyapunov sense and
	for any initial time $t_0$ and initial state $x(t_0)=x_0 \in \mathcal{V}$ with $\mathcal{V}$ as a nonempty neighborhood of the origin in $\mathbb{R}^n$, there exists a settling-time function  $T(x_0)$  such that
	\begin{eqnarray} \nonumber
     T(x_0)=\inf\{T>0:~\lim\limits_{t\rightarrow T} x(t,x_0)=0,\\  x(t,x_0) = 0 ~ \forall t>T\}
     \label{eq_FTS}
	\end{eqnarray}	
\end{defn}
\noindent where $x(t,x_0)$ denotes the state trajectory of system \eqref{eq_syst2} with the initial condition $x_0$. 	It should be noted  that for an autonomous system $\dot{x}(t)=F(x(t))$ to be FTS, it is necessary that the function $F(x)$ be \textit{non-Lipschitz} at the equilibrium point $x=0$ \cite{bhat2000finite,scientia}. Example of non-Lipschitz functions are  $\mbox{sgn}(x)$ or $\mbox{sgn}(x)|x|^a$, $0<a<1$.

\begin{lem} \label{lem_finite} \cite{bhat2000finite}
	Consider the autonomous system $\dot{x}(t)=F(x(t))$.  Assume that there exist a continuously differentiable function $V$, real numbers $c>0$ and $a \in (0,1)$ such that $V(0)=0$, $V(x)>0, x \in \mathcal{V}\subset \mathbb{R}^n$  and
	\begin{align}
	\dot{V}(x(t))=\nabla V(x(t))F(x(t))\leq -c(V(x(t))^a,~~~x \in \mathcal{V}.	\label{eq_V_FTS}
	\end{align}
Then, the origin is  a finite-time stable equilibrium of the system. Moreover, the settling-time function $T(x_0)$ is given as:
	\begin{align}\nonumber
	T(x_0) &\leq \frac{1}{c(1-a)}(V(x_0))^{1-a}.
	\end{align}
\end{lem}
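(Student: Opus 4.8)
The plan is to first establish Lyapunov stability and asymptotic convergence from the fact that \eqref{eq_V_FTS} forces $V$ to be nonincreasing along trajectories, and then to extract the finite-time property by reducing the vector dynamics to a scalar differential inequality for the Lyapunov value along trajectories. First I would set $v(t) := V(x(t))$ for an arbitrary trajectory $x(t,x_0)$ with $x_0 \in \mathcal{V}$; since $V$ is continuously differentiable and $x(\cdot)$ solves $\dot{x}=F(x)$, the chain rule gives $\dot{v}(t) = \nabla V(x(t)) F(x(t)) \le -c\,(v(t))^a$. Because the right-hand side is nonpositive, $v$ is nonincreasing, which together with positive definiteness of $V$ yields local stability in the Lyapunov sense; trajectories starting in a sublevel set of $V$ remain in $\mathcal{V}$, so the inequality persists for all relevant $t$.

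The core step is to integrate this scalar inequality in a way that exposes the finite hitting time. Rather than invoking a comparison theorem for the auxiliary equation $\dot{y} = -c\,y^a$ (whose right-hand side is non-Lipschitz at $y=0$, so uniqueness fails there), I would work directly with the transformed quantity $W(t) := (v(t))^{1-a}$. On any interval where $v(t)>0$, differentiating gives $\dot{W}(t) = (1-a)\,(v(t))^{-a}\,\dot{v}(t) \le (1-a)\,(v(t))^{-a}\,(-c\,(v(t))^a) = -c(1-a)$, a constant bound. Integrating from $0$ to $t$ then yields $(v(t))^{1-a} \le (V(x_0))^{1-a} - c(1-a)\,t$, valid as long as $v$ remains positive.

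Finally I would read off the settling time and handle the boundary behavior. The right-hand side above reaches zero at $t^\ast = \frac{1}{c(1-a)}(V(x_0))^{1-a}$, which forces $v(t)=V(x(t))=0$ no later than $t^\ast$; since $V$ is positive definite, this means $x(t,x_0)=0$ for $t \ge t^\ast$, establishing convergence to the origin within the claimed bound $T(x_0) \le \frac{1}{c(1-a)}(V(x_0))^{1-a}$. The main obstacle is the non-Lipschitz character of the dynamics at the origin: I must argue that once $v$ attains zero it remains zero, so that $t^\ast$ is genuinely a settling time rather than merely a first hitting time, and that the manipulation of $W$ is legitimate up to the instant $v=0$. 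This is resolved by noting that $\dot{v}\le 0$ prevents $v$ from increasing after reaching zero and that $x=0$ is an equilibrium, so the trajectory stays at the origin thereafter; the transformation $W=v^{1-a}$ conveniently sidesteps the uniqueness failure that would otherwise complicate a direct comparison argument.
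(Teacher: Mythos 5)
Your proof is correct: the paper itself gives no proof of this lemma (it is quoted from the cited reference), but your argument is the standard one, and the key step---integrating the differential inequality $\dot V\le -cV^a$ via the transformation $V\mapsto V^{1-a}$ to obtain $V(x(t))^{1-a}\le V(x_0)^{1-a}-c(1-a)t$---is exactly the separation-of-variables computation the paper performs in its proof of Theorem~1. Your care in restricting the differentiation of $W=v^{1-a}$ to where $v>0$ and in using monotonicity plus nonnegativity of $v$ (rather than uniqueness of solutions of the non-Lipschitz comparison equation) to conclude that the trajectory stays at the origin after the hitting time is precisely the right way to close the argument.
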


\begin{defn} \cite{hong2010finite}
	The closed-loop system \eqref{eq_syst2} is locally finite-time input-to-state stable with respect to the input signal $e(t)$  if  for every $x_0 \in  \mathcal{V} \subset \mathbb{R}^n$, and every bounded input $e(t)\in \mathbb{R}^n$ with $\|e\|_\infty< \rho$, we have
	\begin{align} \label{eq_FTISS}
	\|x(t)\|\leq \beta(\|x_0\|,t)+\bar{\gamma} \big (\sup_{0\leq\tau\leq t}  \|e(\tau)\| \big)
	\end{align}	
    where $\bar{\gamma}$ is a class $\mc{K}_\infty$-function\footnote{A function $\gamma:\mathbb{R}_{\geq 0} \rightarrow \mathbb{R}_{\geq 0} $ is of class $\mc{K}$-function  if  it is continuous,  strictly increasing, and $\gamma(0)=0$. Further, it is of class  $\mc{K}_\infty$-function if it is also unbounded ($\gamma(r) \rightarrow \infty$ as $r \rightarrow \infty$).} and $\beta$ is a generalized $\mc{K}\mc{L}$-function\footnote{A function $\beta:\mathbb{R}_{\geq 0} \times \mathbb{R}_{\geq 0} \rightarrow \mathbb{R}_{\geq 0}$ is of class generalized $\mc{K}\mc{L}$-function if $\beta(.,t)$ is of class  $\mc{K}_\infty$-function for all $t$ and $\beta(r,t) \rightarrow 0$ as $t \rightarrow T$ for some $T<\infty$.} with $\beta(\|x_0\|,t)=0$ when $t \geq T$ with $T$ as a continuous function of $x_0$.
\end{defn}

It should be noted that  the main difference  between ISS and FTISS is that  for an ISS system, $\beta(\|x_0\|,t) \rightarrow 0$ as $t \rightarrow \infty$, while for FTISS system,  $\beta(\|x_0\|,t)=0$ as $t \geq T$ with $T<\infty$.

\begin{defn} \cite{agrachev2008nonlinear} \label{Def3}
   A function $V:\mathcal{V}\rightarrow \mathbb{R}_{\geq0}$ is called an \textit{ISS-Lyapunov} function for system \eqref{eq_syst2}, if  there exist class $\mc{K}_\infty$ functions  $\alpha_1,\alpha_2, \alpha_3, {\gamma}$ such that,
	\begin{eqnarray} \label{eq_storage}
	\alpha_1(\|x\|) &\leq& V(x) \leq \alpha_2(\|x\|), \\
    \nonumber
	\dot{V}(x(t),e(t))&\leq& -\alpha_3(\|x(t)\|)+{\gamma}(\|e(t)\|).  \label{eq_VV}
	\end{eqnarray}		
for all $x \in \mathcal{V}$.
\end{defn}

Defining $-\alpha_3(\|x(t)\|) \leq -\alpha_3(\alpha_1^{-1}(V(x(t))))$ and $ \alpha_3(\alpha_1^{-1}(x(t)))=\alpha(x(t))$, then condition \eqref{eq_VV} can be restated as,
	\begin{eqnarray} \nonumber
	\dot{V}(x(t),e(t))&=&\nabla V(x(t)).F(x(t),e(t)) \\ &\leq& -\alpha(V(x(t)))+{\gamma}(\|e(t)\|).
	\label{eq_V2}
	\end{eqnarray}
	with ${\alpha}  \in \mc{K}_\infty$.	

\begin{lem}  \label{lem_lyapunov_order}
	\cite{hong2008finite}
	System~\eqref{eq_syst2} is FTISS if there exists an ISS-Lyapunov function such that in \eqref{eq_V2}, ${\alpha}(V)=\mc{O}(V^a)$ with $0<a<1$, i.e.
	\begin{eqnarray} \label{eq_V3}
	\dot{V}(x(t),e(t)) \leq -cV^a(x(t))+{\gamma}(\|e(t)\|), ~~~\forall x \in \mathcal{V},
	\end{eqnarray}
	with $c>0$, and ${\gamma} \in \mc{K}_\infty$.
\end{lem}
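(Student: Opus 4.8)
The plan is to adapt the classical ISS-Lyapunov estimate of Sontag--Wang to the finite-time decay rate $V^a$, using the settling-time bound of Lemma~\ref{lem_finite} to produce a comparison function that vanishes in finite time. First I would introduce a gain-margin parameter $\theta \in (0,1)$ and split the state space into a decay region and an input-dominated region. Concretely, whenever $\gamma(\|e(t)\|) \leq \theta c\, V^a(x(t))$, that is whenever $V(x(t)) \geq \big(\gamma(\|e\|_\infty)/(\theta c)\big)^{1/a} =: V^\star$, inequality \eqref{eq_V3} yields
\begin{eqnarray} \nonumber
\dot{V}(x(t),e(t)) \leq -cV^a(x(t))+\theta c V^a(x(t)) = -(1-\theta)c\, V^a(x(t)),
\end{eqnarray}
so on this region $V$ obeys a pure finite-time dissipation inequality with effective coefficient $\tilde{c}=(1-\theta)c>0$.

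Next I would invoke a comparison argument. Let $y(t)$ solve $\dot{y}=-\tilde{c}\,y^a$ with $y(0)=\alpha_2(\|x_0\|)\geq V(x_0)$; separating variables gives $y(t)^{1-a}=\max\{y(0)^{1-a}-(1-a)\tilde{c}\,t,\,0\}$, so $y(t)\equiv 0$ for all $t\geq T(x_0)=\alpha_2(\|x_0\|)^{1-a}/\big((1-a)\tilde{c}\big)$, which is precisely the settling-time form of Lemma~\ref{lem_finite}. By the comparison lemma, as long as $V(x(t))\geq V^\star$ we have $V(x(t))\leq y(t)$; moreover on the boundary $V=V^\star$ the displayed inequality gives $\dot{V}<0$, so $V$ can never escape above $\max\{V(x_0),V^\star\}$ and is driven down to the threshold $V^\star$ in finite time. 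Hence $V(x(t))\leq \max\{y(t),V^\star\}\leq y(t)+V^\star$ for all $t$.

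Finally I would translate back to the state norm via the sandwich bound \eqref{eq_storage}, using that $\alpha_1^{-1}$ is increasing with $\alpha_1^{-1}(p+q)\leq \alpha_1^{-1}(2p)+\alpha_1^{-1}(2q)$. Setting
\begin{eqnarray} \nonumber
\beta(\|x_0\|,t):=\alpha_1^{-1}\big(2y(t)\big), \qquad \bar{\gamma}(r):=\alpha_1^{-1}\Big(2\big(\gamma(r)/(\theta c)\big)^{1/a}\Big),
\end{eqnarray}
delivers the FTISS estimate \eqref{eq_FTISS}. Here $\bar{\gamma}$ is a composition of $\mc{K}_\infty$ functions and hence $\mc{K}_\infty$, while $\beta(\cdot,t)$ is $\mc{K}_\infty$ for each fixed $t$ and, crucially, $\beta(\|x_0\|,t)=0$ for $t\geq T(x_0)$ with $T$ continuous in $x_0$, so $\beta$ is a generalized $\mc{K}\mc{L}$-function in the required sense.

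I expect the main obstacle to be the careful junction analysis at the threshold $V^\star$: one must verify that the time-varying input cannot push $V$ back above $V^\star$ once the trajectory has entered the corresponding sublevel set, which is exactly why the threshold is defined through $\|e\|_\infty=\sup_{\tau}\|e(\tau)\|$ rather than the instantaneous $\|e(t)\|$ (replacing the global supremum by the running supremum $\sup_{0\leq\tau\leq t}\|e(\tau)\|$ at the end by causality only sharpens the bound). A secondary care point is that the non-Lipschitz right-hand side $-\tilde c\,y^a$ at the origin admits the comparison lemma only once its finite-time convergence is interpreted as $y$ \emph{remaining} at zero after the settling time $T(x_0)$, consistent with the finite-time stability notion of Lemma~\ref{lem_finite}.
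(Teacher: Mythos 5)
The paper does not prove this lemma: it is imported verbatim from the cited reference \cite{hong2008finite} with no argument supplied, so there is no in-paper proof to compare against. Your proposal is correct and reproduces the standard route used in that reference: split off a gain margin $\theta$, observe that outside the sublevel set $\{V\le V^\star\}$ the dissipation inequality becomes a pure finite-time decay $\dot V\le -(1-\theta)cV^a$, apply the comparison lemma against $\dot y=-(1-\theta)c\,y^a$ (whose solution vanishes at the settling time of Lemma~\ref{lem_finite} and stays at zero), use forward invariance of $\{V\le V^\star\}$ via $\dot V<0$ on its boundary, and convert $V(x(t))\le\max\{y(t),V^\star\}$ into the estimate \eqref{eq_FTISS} through $\alpha_1^{-1}$ and the weak triangle inequality. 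The only point worth flagging is cosmetic and inherited from the definitions rather than from your argument: for $t$ past the settling time of small initial conditions, $\beta(\cdot,t)=\alpha_1^{-1}(2y(t))$ vanishes on a neighborhood of $r=0$ and so is not \emph{strictly} increasing in $r$, which sits uneasily with the paper's footnote requiring $\beta(\cdot,t)\in\mc{K}_\infty$ for all $t$; this is a known looseness in the ``generalized $\mc{K}\mc{L}$'' terminology of the source literature and not a gap in your reasoning. Your handling of the two genuinely delicate points --- forward invariance of the input-dominated sublevel set under a time-varying $e(t)$, and the use of the maximal (identically-zero continuation) solution of the non-Lipschitz comparison ODE --- is sound.
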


\subsection{Background on FTS and FTISS Systems}
Finite-time stable (FTS) systems find application where instead of typical asymptotic convergence, the finite-time convergence is desired, see some applications in \cite{single_bit}.
Such models are particularly associated with  \textit{non-smooth} feedback laws to stabilize systems which, otherwise, are unstabilizable by smooth feedback.
In this direction, system \textit{homogeneity} is a related concept and it is known that any stable homogeneous system with \textit{negative} homogeneity degree is FTS \cite{polyakov2019consistent,bernuau2015robust}. For example, in \cite{anta2010sample}
input-to-state stability of homogeneous systems is studied for triggering control of nonlinear systems. Homogeneous systems  find applications in, e.g. sliding mode control \cite{levant2005homogeneity} and fixed-time stability  \cite{bernuau2015robust} among others. Finite-time stability was first studied in the
optimal control literature \cite{ryan1982optimal} and finite-time
controllers generally result in a fast response and high tracking precision as well as disturbance-rejection properties due to their non-smoothness \cite{hong2010finite}.
Similarly, FTISS systems have the same privileges over the ISS systems, including the finite-time convergence among others.
For better understanding of the ISS and FTISS systems and their differences we refer interested readers to \cite{wang2013finite,hong2010finite,agrachev2008nonlinear,hong2008finite,sontag2008input}. Note that FTISS systems are comparatively  less studied in comparison
with smooth systems and the literature is limited to the mentioned references.

\subsection{Control Objective}
In this paper, we assume that the closed-loop system, in an ideal continuous-time case, is FTISS. In this direction the following assumption is made:
\begin{ass} \label{ass_lyapunov}
	For system~\eqref{eq_syst2}, there exists an ISS-Lyapunov function $V:\mathcal{V}\rightarrow \mathbb{R}_{\geq0}$, which satisfies Lemma \ref{lem_lyapunov_order} for  given class $\mc{K}_\infty$ functions  $\alpha,\gamma$, and  constants $0<a<1$, $c>0$.
\end{ass}

\begin{figure}
	\centering
	\includegraphics[width=3.3in]{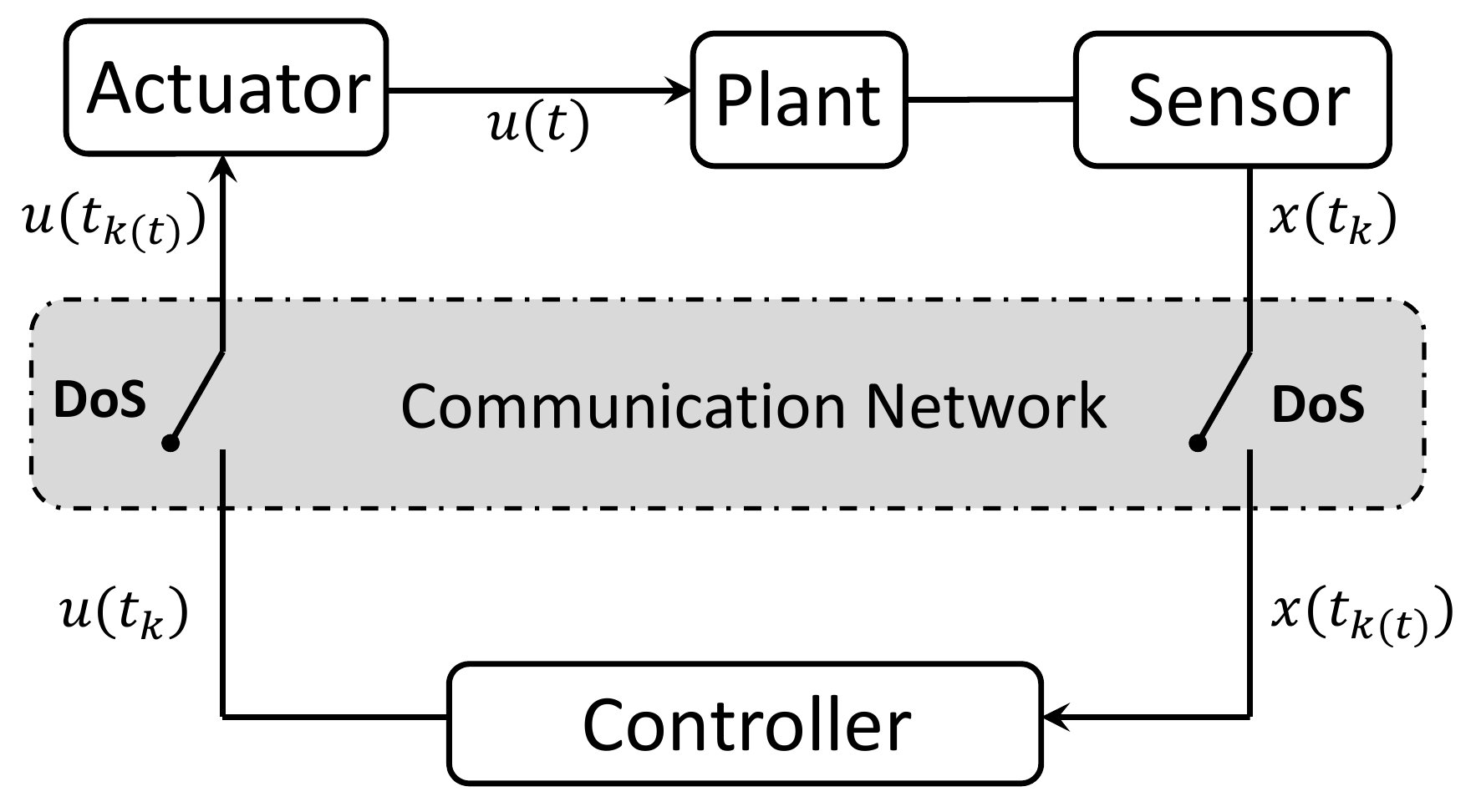}
	\caption{The block-diagram of the closed-loop system under DoS attack.  }
	\label{fig_model}
\end{figure}

The block diagram of the considered networked control system is shown in Fig.\ref{fig_model} where  the information exchange between sensor/actuator and controller is done through communication channels. In this paper, it is assumed that the attacker can compromise the security of the communication channels to inject  DoS/DDoS attack.
In the presence of DoS/DDoS, the input $u(t)$ is generated based on the most recently received signal when no DoS was present.
Then, the problem is under what conditions on DoS attack the closed-loop finite-time stability is preserved. In this paper, we consider  justified assumptions on the duration/frequency of the DoS signal along with  assumption to prevent finite escape time in the system. Under these assumptions, borrowing ideas from event-based
control and FTISS Lyapunov functions, the problem of interest is to develop an event-triggered control such that in the presence of DoS attack, the closed-loop system remains FTS and the inter-execution times of the controller are bounded away from zero to avoid Zeno phenomena\footnote{The Zeno phenomenon (or Zeno behavior) refers to the phenomenon of infinite number of events over a finite-time period. In this paper, and in general control literature, the Zeno phenomenon implies infinite number of control updates and sampling over finite-time.}. We particularly derive the conditions on DoS attack such that, using modified hybrid event-triggered mechanism, the system remains FTS under DoS. 
It should be mentioned that we assume all the entities in the considered networked control system shown in Fig. \ref{fig_model}  (i.e. plant, controller, sensor, and actuator) perform their specified actions using some sort of suitable user authentication schemes such as  the ones given in \cite{wang2016two,wang2016challenges}. However, it is assumed that the attacker can still compromise the implemented authentication scheme and launch DoS/DDoS attack on the networked control system shown in Fig.  \ref{fig_model} .

We summarize the  problems in this paper as follows:

\textbf{Problem 1:} Development of a Zeno-free event-triggered mechanism such that the closed-loop system remains FTS without considering the DoS attack.

\textbf{Problem 2:} Extending the event-triggered mechanism obtained in the first problem in the presence of DoS attack and deriving an upper-bound relation on the frequency and duration of DoS intervals such that the closed-loop system remains FTS.

\section{Event-triggered Mechanism} \label{sec_sample}
In this section, Problem 1 is considered and an event-triggered mechanism  is developed such that the closed-loop system remains FTS without considering the DoS attack. It is  also shown that the Zeno phenomena is almost excluded. To propose our sampling scheme, we make the following assumption in the paper.
\begin{ass} \label{ass_mu}
	For  $x \in \mathcal{V}$, there exists $\mu>0$ such that ${\gamma}(4\|x\|)\leq\mu \alpha_1^a(\|x\|)$ where the functions $\alpha_1$ and $\gamma$ are defined in Definition \ref{Def3} and  $\mathcal{V}$ is a nonempty neighborhood of the origin in $\mathbb{R}^n$.
\end{ass}
In the proposed event-triggered framework, the event instants, denoted by $t_k$, $k=0,1,\dots$,  are generated based on the following event-triggered mechanism:
	\begin{eqnarray} \label{eq_sampling1}
	t_{k+1} = \inf\{t>t_k~|~{\gamma}(4\|e(t)\|)>c(1-\lambda)V^a(x(t))\},
	\end{eqnarray}
with $0<\lambda<1$, $e(t) = x(t_{k})-x(t)$ and the Lyapunov function $V(x(t))$ satisfies the FTISS condition in \eqref{eq_V3}.
\begin{lem} \label{lem_zeno1}
	The event-triggered mechanism~\eqref{eq_sampling1} is almost always Zeno-free.
\end{lem}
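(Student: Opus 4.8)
The plan is to show that on any region bounded away from the origin the inter-execution times $t_{k+1}-t_k$ admit a strictly positive lower bound, and that the only place where this bound can degenerate is the origin itself, reached at the finite settling time. Since the origin corresponds to a single time instant, confining any possible event accumulation there is exactly the sense in which the mechanism is ``almost always'' Zeno-free.

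First I would exploit the error reset. By construction $e(t)=x(t_k)-x(t)$, so $e(t_k)=0$ and hence $\gamma(4\|e(t_k)\|)=\gamma(0)=0$, which lies strictly below the threshold $c(1-\lambda)V^a(x(t_k))$ whenever $x(t_k)\neq 0$. Thus immediately after each event the triggering inequality holds with strict slack, and a nonzero time must elapse before it is violated. On $[t_k,t_{k+1})$ the sampled state $x(t_k)$ is frozen, so $\dot e(t)=-\dot x(t)=-F(x(t),e(t))$ and therefore $\frac{d}{dt}\|e(t)\|\le\|F(x(t),e(t))\|$.

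Next I would bound the right-hand side. Since finite-time stability forces $F$ to be non-Lipschitz only at the origin, on any compact set $\mathcal{K}\subset\mathcal{V}$ that excludes a neighborhood of the origin the field $F$ is continuous and hence bounded, say $\|F(x,e)\|\le M$ for all $(x,e)$ in the relevant compact set. Integrating from $t_k$ gives the linear growth estimate $\|e(t)\|\le M\,(t-t_k)$. On the other hand, on such a set $\|x\|\ge r>0$, so $V(x)\ge\alpha_1(\|x\|)\ge\alpha_1(r)$ and consequently $V^a(x)\ge\alpha_1^a(r)=:m>0$. At the triggering instant the infimum in \eqref{eq_sampling1} is attained with equality, $\gamma(4\|e(t_{k+1})\|)=c(1-\lambda)V^a(x(t_{k+1}))\ge c(1-\lambda)m$, and since $\gamma$ is strictly increasing this forces $\|e(t_{k+1})\|\ge\frac{1}{4}\gamma^{-1}\big(c(1-\lambda)m\big)=:\delta>0$. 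Combining $\delta\le\|e(t_{k+1})\|\le M\,(t_{k+1}-t_k)$ yields the uniform lower bound
\begin{align}
t_{k+1}-t_k\ \ge\ \frac{\delta}{M}\ >\ 0,
\label{eq_interevent}
\end{align}
so only finitely many events can occur while the trajectory remains in $\mathcal{K}$.

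Finally I would address the qualifier ``almost always.'' As the region $\mathcal{K}$ is allowed to shrink toward the origin, $r\to 0$ gives $m\to 0$ and $\delta\to 0$, so the bound \eqref{eq_interevent} can collapse; this is unavoidable and reflects the very non-Lipschitz structure that makes the closed loop finite-time stable. Hence the only instant at which infinitely many events could accumulate is the finite settling time $T$ at which $x(T)=0$. Since $\{T\}$ has measure zero while \eqref{eq_interevent} rules out accumulation on every interval $[0,T-\varepsilon]$, the mechanism \eqref{eq_sampling1} is Zeno-free for all time except possibly at this single settling instant, which is precisely the claimed ``almost always'' Zeno-freeness. I expect the delicate step to be this last one: making rigorous that the potential accumulation is confined to the origin, and arguing that the non-Lipschitz behavior there can at worst produce a measure-zero set of Zeno times rather than genuine dense chattering over a time interval.
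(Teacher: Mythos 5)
Your proof is correct, but it takes a genuinely different route from the paper's. The paper disposes of the lemma in three lines by invoking the result of Tabuada (cited as \cite{tabuada2007event}) that any triggering rule of the form $\gamma(\|e\|)\le K\sigma(x)$ excludes Zeno behavior wherever $\gamma$ and $\sigma^{-1}$ are Lipschitz, and then observing that $\gamma$ and $V^{-a}$ are Lipschitz almost everywhere (failing only at the origin). You instead reconstruct the content of that cited result from first principles in this specific setting: the error resets to zero at each event, grows at most linearly with rate $M=\sup\|F\|$ on compact sets, and must climb to the threshold $\delta=\tfrac14\gamma^{-1}\bigl(c(1-\lambda)\alpha_1^a(r)\bigr)$ before the next event, giving the explicit dwell time $t_{k+1}-t_k\ge\delta/M$ away from the origin. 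What your version buys is a quantitative inter-event bound and a precise localization of the potential failure: the bound degenerates only as $r\to0$, so any event accumulation is confined to the single settling instant $T$, whereas the paper leaves the ``almost'' qualifier phrased in terms of almost-everywhere Lipschitzness in state space without tracing it back to a set of times. What the paper's version buys is brevity and reliance on a standard citable criterion. One phrasing in your argument is loose but harmless: finite-time stability forces $F$ to be non-Lipschitz \emph{at} the origin, but does not guarantee Lipschitzness elsewhere; fortunately your estimate only uses continuity and hence boundedness of $F$ on compact sets, so nothing breaks. Note also that neither your argument nor the paper's actually excludes accumulation of events at the settling instant itself --- both proofs establish exactly the ``almost'' version of Zeno-freeness claimed in the statement, so you have matched, not fallen short of, what the paper proves.
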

\begin{proof}
	Note that for $t_k<t<t_{k+1}$ we have $\gamma(4\|e(t)\|)\leq c(1-\lambda)V^a(x(t))$.
	It is known that any event-triggered mechanism for which the error is restricted to satisfy $\gamma(\|e\|)\leq K\sigma(x)$ is Zeno-free wherever $\gamma$ and $\sigma^{-1}$ functions are Lipschitz \cite{tabuada2007event}.  Note that the Lipschitz continuity is a sufficient condition for Zeno-freeness not a necessary condition. Since the functions $\gamma$ and $V^{-a}$ are Lipschitz \textit{almost} everywhere, the event-triggered mechanism  \eqref{eq_sampling1} does not show  Zeno behavior almost everywhere.
\end{proof}
Similar analysis as in the above proof is given in
\cite{du2018finite} to prove the \textit{almost} Zeno-freeness of an event-triggered scheme.

\begin{theorem}
	The system~\eqref{eq_syst} with sample-and-hold input $u(t)=\psi(x(t_k))$, $t\in [t_k,t_{k+1})$ under the event-triggering mechanism~\eqref{eq_sampling1} is FTS. Particularly, the Lyapunov function is such that,
	\begin{eqnarray} \label{eq_V_DoS1}
	V(x(t))^{1-a} \leq V(x(t_k))^{1-a}-(1-a)\omega_1(t-t_k)
	\end{eqnarray}
	where $\omega_1= c\lambda$ and $t_k<t<t_{k+1}$.
\end{theorem}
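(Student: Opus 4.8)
The plan is to analyze the evolution of $V$ on a single inter-event interval $[t_k,t_{k+1})$, show it obeys a pure finite-time decay inequality of the type appearing in Lemma~\ref{lem_finite}, and then read off both the stated bound \eqref{eq_V_DoS1} and the FTS conclusion. The only ingredients needed are the defining property of the triggering rule and the FTISS Lyapunov inequality \eqref{eq_V3} guaranteed by Assumption~\ref{ass_lyapunov}.

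First I would exploit the defining property of the triggering rule \eqref{eq_sampling1}: since $t_{k+1}$ is the \emph{first} instant at which $\gamma(4\|e(t)\|)>c(1-\lambda)V^a(x(t))$, for every $t\in[t_k,t_{k+1})$ we have the reversed inequality $\gamma(4\|e(t)\|)\le c(1-\lambda)V^a(x(t))$. Because $\gamma$ is a class-$\mc{K}_\infty$ function it is strictly increasing, so $\gamma(\|e(t)\|)\le\gamma(4\|e(t)\|)$; the factor $4$ therefore only makes the triggering more conservative and is harmlessly absorbed. Substituting $\gamma(\|e(t)\|)\le c(1-\lambda)V^a(x(t))$ into \eqref{eq_V3} gives, for $t\in[t_k,t_{k+1})$,
\begin{equation*}
\dot{V}(x(t))\le -cV^a(x(t))+c(1-\lambda)V^a(x(t))=-c\lambda V^a(x(t))=-\omega_1 V^a(x(t)),
\end{equation*}
which is exactly the hypothesis of Lemma~\ref{lem_finite} with the constant $c$ replaced by $\omega_1=c\lambda>0$.

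Next I would integrate this differential inequality. Writing $\dot V\le-\omega_1 V^a$ as $\frac{d}{dt}\big(\tfrac{1}{1-a}V^{1-a}\big)\le-\omega_1$ and integrating from $t_k$ to $t$ yields $V(x(t))^{1-a}\le V(x(t_k))^{1-a}-(1-a)\omega_1(t-t_k)$, which is precisely \eqref{eq_V_DoS1}. Since $a\in(0,1)$ and $\omega_1>0$, the right-hand side decreases to zero in finite time, so by Lemma~\ref{lem_finite} the origin is finite-time stable with settling time bounded by $\tfrac{1}{c\lambda(1-a)}V(x_0)^{1-a}$.

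The remaining point, requiring care rather than difficulty, is to patch these per-interval estimates into a statement along the whole trajectory. The input $u(t)=\psi(x(t_k))$ is only piecewise constant, but the closed-loop trajectory $x(t)$, and hence $V(x(t))$, is continuous across each sampling instant, with $e(t_k)=x(t_k)-x(t_k)=0$ so the triggering inequality is trivially satisfied at every reset; consequently the strict-decrease estimate chains seamlessly from one interval to the next. I expect the main obstacle to be purely technical: the non-Lipschitz behavior of $V^{-a}$ near the origin (mirroring the almost-Zeno caveat of Lemma~\ref{lem_zeno1}) means the differential inequality should be handled via the comparison argument of Lemma~\ref{lem_finite} rather than by naive separation of variables through $V=0$, but this is exactly what Lemma~\ref{lem_finite} supplies.
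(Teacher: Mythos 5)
Your proposal is correct and follows essentially the same route as the paper's own proof: use the triggering rule to bound $\gamma(\|e(t)\|)\le\gamma(4\|e(t)\|)\le c(1-\lambda)V^a(x(t))$ on each inter-event interval, substitute into \eqref{eq_V3} to obtain $\dot V\le-c\lambda V^a$, and integrate to get \eqref{eq_V_DoS1}. The additional remarks on continuity across sampling instants and on handling the differential inequality via comparison rather than naive separation of variables are sensible refinements but do not change the argument.
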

\begin{proof}
	Following Assumption~\ref{ass_lyapunov}, we have an ISS-Lyapunov function such that $\dot{V}(x(t),e(t)) \leq -cV^a(x(t))+{\gamma}(\|e(t)\|)$. Further, the event-triggering mechanism~\eqref{eq_sampling1} implies that for $t_k<t<t_{k+1}$ we have $\gamma(4\|e(t)\|)\leq c(1-\lambda)V^a(x(t))$. Since ${\gamma} \in \mc{K}_\infty$, we have $\gamma(\|e(t)\|)<\gamma(4\|e(t)\|)$ and  it follows that,
	\begin{align} \nonumber
	\dot{V}(x(t)) \leq& -cV^a(x(t))+c(1-\lambda)V^a(x(t))\\ \label{eq_vdot_omega1}
	\leq& -c\lambda V^a(x(t)),
	\end{align}
	and hence, it can be concluded the closed-loop system is FTS. Finally,
	by solving the ordinary differential inequality \eqref{eq_vdot_omega1}, it follows that:
	\begin{eqnarray} \nonumber
	\int_{V(x(t_{k}))}^{V(x(t))} \frac{dV}{V^a} &\leq& \int_{t_k}^{t} -c\lambda dt \\ \nonumber
	\frac{V(x(t))^{1-a}-V(x(t_k))^{1-a}}{1-a} &\leq& -c\lambda(t-t_k)	
	\end{eqnarray}	
	 which leads to the inequality \eqref{eq_V_DoS1}.
\end{proof}

\section{Finite-Time Stability under Denial-of-Service}  \label{sec_DoS}
In this section, the main result on the finite-time stability under DoS attacks is provided which is defined as Problem 2. We first model the DoS, define related concepts and assumptions, and then provide the sampling scheme such that the system remains FTS.

Let $\{\sigma_n\}_{n \in \mathbb{N}_0}$ denote the time sequences of the DoS occurrence and $\{\tau_n\}_{n \in \mathbb{N}_0}, \tau_n>0$ denote the duration of the $n$-th DoS on the input communication. Further, define $H_n =  [\sigma_n,\sigma_n+\tau_n)$ as the $n$-th DoS interval.
Assume that during each DoS time interval the actuator can either, (i) generate an input based on the most recent control signal update, or (ii) generate zero-input.
The case of zero-input strategy for control of linear systems under lossy links is considered in \cite{schenato2009zero}. In this paper we consider case (i). Define $ \Theta(t)$ as the set of all time intervals during which no DoS occurs, i.e. $ \Theta(t)=[0,t]\setminus\bigcup_{n\in \mathbb{N}_0} \{H_n\}$. In the sample-and-hold scenario, we have $u(t)=\psi(x(t_{k(t))})$ where $k(t) = \mbox{sup}\{k \in \mathbb{N}_0|t_k \in \Theta(t)\}$ denotes the last event instant with successful data transmission over the network. In order to characterize DoS, the following assumptions are made over the interval $[0,t)$.

\begin{ass} \label{ass_dos1}
DoS frequency:  For all $t \geq 0$, there exist $\eta \geq 0$ and $\tau_D > 0$ such that,  $n(t) \leq \eta + t/\tau_D$ where $n(t)$ denotes the number of off/on DoS transitions in the interval $[0,t)$.
\end{ass}

\begin{ass} \label{ass_dos2}
DoS duration: For all $t \geq 0$, there exist $\kappa \geq 0$ and $\theta >1$ such that, $|\Xi(t)|\leq \kappa + t/\theta$ where $\Xi(t) = \bigcup_{n\in \mathbb{N}_0} \{H_n\}$ denotes the total interval of DoS over $[0,t)$.
\end{ass}

It should be noted that $\tau_D$ bounds the average \textit{dwell-time} between two consecutive DoS intervals, see \cite{hespanha1999stability} for more information. In fact, $1/\tau_D$ is the upper-bound on the frequency of off/on DoS transitions. Moreover, $1/\theta$ is a measure of the time fraction over which the DoS occurs, and therefore $t/\theta$ can be interpreted as the average DoS duration.

\begin{rem}
	The above assumptions on the  DoS duration/frequency are practically motivated by the fact that there are several techniques to mitigate DoS attacks, e.g. high-pass filter and spreading methods. These mitigation techniques limit the duration time and frequency of DoS intervals over which input communication is denied \cite{de2015input} and justify Assumptions~\ref{ass_dos1} and~\ref{ass_dos2}. As an example, the attack scenario in \cite{gupta2010optimal} considers that out of $\mc{N}$ possible communications $\mc{M}<\mc{N}$ are denied. This is a special case of the assumptions in this paper, where $\kappa = 0$, $\theta = \infty$ in Assumption~\ref{ass_dos2}, and $\eta = \mc{M}$, $ \tau_D = {\delta}\frac{\mc{M}}{\mc{N}}$  with some ${\delta}>0$ in Assumption~\ref{ass_dos1} .
\end{rem}

Considering Assumptions~\ref{ass_lyapunov}-\ref{ass_dos2}, the \textit{hybrid}  event-triggered mechanism is proposed as follows:
\begin{itemize}
	\item Case (I): if $t_k$ is not in a DoS interval, then the next event instant is defined similar to \eqref{eq_sampling1} as,
	\begin{eqnarray} \label{eq_sampling}
	t_{k+1} = \inf\{t>t_k~|~{\gamma}(4\|e(t)\|)>c(1-\lambda)V^a(x(t))\}
	\end{eqnarray}
with $0<\lambda<1$ and $e(t) = x(t_{k(t)})-x(t)$.
 \item Case (II): if $t_k$ is in a DoS interval, then the next event instant is defined such that for $\Delta_k=t_{k+1}-t_k$ we have $\underline{\Delta}\leq \Delta_k\leq\overline{\Delta}$, where $\overline{\Delta}>0$   and $\underline{\Delta}>0$ are the upper-bound and lower-bound of the inter-event interval, respectively.
\end{itemize}

In the following, first it is shown that  the proposed hybrid event-triggering mechanism does not show Zeno behavior.

\begin{lem}
	The  event-triggering mechanism defined by Case~(I)-(II) does not show Zeno behavior almost at every point.
\end{lem}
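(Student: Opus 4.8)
The plan is to treat the two cases of the hybrid mechanism separately and then argue that neither can generate infinitely many events on a finite interval, apart from a measure-zero exceptional set. First I would dispatch Case~(II), which is the easy half: by construction the inter-event interval satisfies $\Delta_k = t_{k+1}-t_k \geq \underline{\Delta} > 0$, so the sampling instants during any DoS interval are separated by at least the fixed positive time $\underline{\Delta}$. Consequently only finitely many Case~(II) events can fall in any bounded time window, which rules out Zeno behavior on the DoS portions of the timeline without any qualification.

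Next I would handle Case~(I), the non-DoS intervals, by reusing the reasoning already established in Lemma~\ref{lem_zeno1}. Between two consecutive Case~(I) events the triggering rule \eqref{eq_sampling} keeps the error bounded by $\gamma(4\|e(t)\|) \leq c(1-\lambda)V^a(x(t))$, which is of the standard form $\gamma(\|e\|) \leq K\sigma(x)$ with $\sigma(x)=V^a(x)$. Invoking the result of \cite{tabuada2007event}, the inter-event time is bounded strictly away from zero wherever $\gamma$ and $\sigma^{-1}=V^{-a}$ are Lipschitz. Since $\gamma \in \mc{K}_\infty$ and $V^{-a}$ are Lipschitz \emph{almost} everywhere, Case~(I) produces at most finitely many events per finite window, again outside a measure-zero set.

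Finally I would combine the two observations: on an arbitrary finite interval the total event count is the sum of the Case~(I) and Case~(II) counts, each of which has been shown to be finite (unconditionally for Case~(II), almost everywhere for Case~(I)). A finite sum of finite numbers is finite, so the hybrid mechanism admits only finitely many events on every finite interval except possibly on the exceptional set, which establishes almost-everywhere Zeno-freeness as claimed.

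The step I expect to be the only genuinely delicate point is isolating where the \emph{almost} qualifier originates. It is inherited entirely from Case~(I) and is tied to the single point at which $V^{-a}$ fails to be Lipschitz, namely the equilibrium: a finite-time-stable vector field is necessarily non-Lipschitz at the origin, so $V^a$ vanishes there and $V^{-a}$ blows up. Articulating that this exceptional set is exactly the non-Lipschitz point already flagged in Lemma~\ref{lem_zeno1}, and that the enforced lower bound $\underline{\Delta}$ of Case~(II) contributes nothing further to the exceptional set, is where care is required; everything else reduces to the bookkeeping above.
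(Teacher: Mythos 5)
Your proposal is correct and follows essentially the same route as the paper: Case~(II) is dispatched by the explicit lower bound $\underline{\Delta}>0$ on the inter-event interval, and Case~(I) is reduced to the argument of Lemma~\ref{lem_zeno1}. Your additional remarks pinpointing the origin of the \emph{almost} qualifier (the failure of $V^{-a}$ to be Lipschitz at the equilibrium) are a harmless elaboration of what the paper leaves implicit.
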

\begin{proof}
	Note that the sampling during the DoS attack (Case~(II)) is lower-bounded by $\underline{\Delta}>0$. This along with the proof of Lemma~\ref{lem_zeno1} imply that the hybrid event-triggering mechanism is Zeno-free almost everywhere.
\end{proof}

\begin{lem} \label{lem_omega}
	Consider system \eqref{eq_syst} with feedback $u(t)=\psi(x(t_{k(t))}))$, $t\in [t_k,t_{k+1})$  under the event-triggering mechanism  Case (I)-(II). Then, if $t_k$ is in a DoS interval, it follows that:
		\begin{align} \nonumber
		V^{1-a}(x(t)) \leq& V^{1-a}(x(t_{k(t_k)+1})))\\
		&+(1-a)\omega_2(t-t_{k(t_k)+1}),
		\label{eq_V_DoS2}
		\end{align}
		where $V$ is the ISS Lyapunov function defined in Assumption 1, $\omega_2= c(1-\lambda)+2\mu$,  and $t_{k(t_k)+1}<t<t_{k+1}$.\footnote{Note that $t_{k(t_k)+1}$ denotes the first possible sampling time-instant that is included in the DoS time interval occurring after $t_k$.}
\end{lem}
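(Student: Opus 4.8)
The plan is to derive, for the DoS interval containing $t_k$, a one-sided \emph{growth} estimate $\dot V(x(t)) \le \omega_2 V^a(x(t))$ and then integrate it, mirroring the computation that turned \eqref{eq_vdot_omega1} into \eqref{eq_V_DoS1} but now with a positive coefficient. First I would fix notation on the interval: since no successful transmission occurs once the DoS is active, the index $k(t)$ is frozen at $k(t_k)$, so the held input is $\psi(x(t_{k(t_k)}))$ and the error is $e(t) = x(t_{k(t_k)}) - x(t)$ throughout the interval. The instant $t_{k(t_k)+1}$ is the first sampling time that falls inside the DoS window; because it is generated by the Case~(I) rule \eqref{eq_sampling} (its predecessor $t_{k(t_k)}$ lies outside the DoS), the triggering threshold holds at it, i.e.\ $\gamma(4\|e(t_{k(t_k)+1})\|) \le c(1-\lambda)V^a(x(t_{k(t_k)+1}))$.

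The core step is to bound $\gamma(\|e(t)\|)$ for $t \in (t_{k(t_k)+1}, t_{k+1})$, since during DoS the triggering rule can no longer be enforced. I would split the error as
\begin{align} \nonumber
e(t) = \big(x(t_{k(t_k)}) - x(t_{k(t_k)+1})\big) + \big(x(t_{k(t_k)+1}) - x(t)\big),
\end{align}
the first bracket being $e(t_{k(t_k)+1})$ and the second the drift accumulated after the first in-DoS event, and use that for an increasing, nonnegative $\gamma$ one has $\gamma(p+q) \le \gamma(2\max(p,q)) \le \gamma(2p)+\gamma(2q)$. The first term is controlled by the Case~(I) threshold above (the factor $4$ in \eqref{eq_sampling} absorbs the factor $2$ from the split), yielding a $c(1-\lambda)V^a$ contribution; the drift term is bounded by $\|x(t_{k(t_k)+1})\| + \|x(t)\|$ and fed through Assumption~\ref{ass_mu}, where $\gamma(4\|x\|)\le \mu\alpha_1^a(\|x\|)\le \mu V^a(x)$ (using $\alpha_1(\|x\|)\le V(x)$ from \eqref{eq_storage}), producing a $2\mu V^a$ contribution. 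Collecting terms gives $\gamma(\|e(t)\|) \le [c(1-\lambda)+2\mu]\,V^a(x(t)) = \omega_2 V^a(x(t))$.

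With this in hand, the ISS-Lyapunov inequality \eqref{eq_V3} together with $-cV^a \le 0$ gives $\dot V(x(t)) \le \gamma(\|e(t)\|) \le \omega_2 V^a(x(t))$ on $(t_{k(t_k)+1}, t_{k+1})$. Writing $\frac{d}{dt}V^{1-a} = (1-a)V^{-a}\dot V \le (1-a)\omega_2$ and integrating from $t_{k(t_k)+1}$ to $t$ yields exactly \eqref{eq_V_DoS2}.

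The step I expect to be the main obstacle is making the error bound uniform in $V^a(x(t))$. The split naturally produces both $V^a(x(t_{k(t_k)+1}))$ and $V^a(x(t))$, so collapsing it to $\omega_2 V^a(x(t))$ requires $V(x(t_{k(t_k)+1})) \le V(x(t))$ during the DoS growth phase. This is not automatic, since $V$ need not be monotone right after $t_{k(t_k)+1}$; the clean way to handle it is a comparison argument — on any subinterval where $V(x(t)) < V(x(t_{k(t_k)+1}))$ the claimed bound \eqref{eq_V_DoS2} holds trivially, while on the complementary set the pointwise inequality $\dot V \le \omega_2 V^a$ applies — so one effectively integrates only where $V$ exceeds its value at $t_{k(t_k)+1}$. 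A secondary point needing care is justifying that $t_{k(t_k)+1}$ indeed inherits the Case~(I) threshold rather than the Case~(II) spacing, which hinges on $t_{k(t_k)}$ being the last event outside the DoS window.
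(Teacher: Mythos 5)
Your proposal is correct and follows essentially the same route as the paper's proof: you invoke the Case-(I) threshold at $t_{k(t_k)+1}$, split the error via $\gamma(p+q)\le\gamma(2p)+\gamma(2q)$, convert the state terms through Assumption~\ref{ass_mu} and \eqref{eq_storage} to reach $\gamma(\|e(t)\|)\le (c(1-\lambda)+\mu)V^a(x(t_{k(t_k)+1}))+\mu V^a(x(t))$, and your case-split on whether $V(x(t))$ exceeds $V(x(t_{k(t_k)+1}))$ is exactly the comparison-lemma argument the paper uses to integrate $\dot V\le\omega_2\max\{V^a(x(t_{k(t_k)+1})),V^a(x(t))\}$. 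The only cosmetic difference is that the paper first bounds the held state $\|x(t_{k(t_k)})\|$ and then splits $e(t)=x(t_{k(t_k)})-x(t)$ into $\|x(t_{k(t_k)})\|+\|x(t)\|$, whereas you split $e(t)$ at $t_{k(t_k)+1}$; both yield the same constants and the same $\omega_2$.
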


\begin{proof} Having $t_k$ in the DoS interval and following Case (II), $t_{k+1} = t_k + \Delta_k$. Assuming that a successful event instant $k(t_k)$ occurs before $t_k$, it follows that:
	\begin{eqnarray} \nonumber
	\|e(t)\| \leq \frac{1}{4} \gamma^{-1}\big(c(1-\lambda) V^a(x(t)) \big),~ t_{k(t_k)}\leq t \leq t_{k(t_k)+1}.
	\end{eqnarray}
Following the continuity of $V(x(t))$, $x(t)$, and $e(t)$ at $t_{k(t_k)+1}$, it follows that:
	\begin{align} \nonumber
	\|x(t_{k(t_k)})\| \leq& \|x(t_{k(t_k)+1})\| + \|e(t_{k(t_k)+1})\|\\ \nonumber
	 \leq& \|x(t_{k(t_k)+1})\| + \frac{1}{4} \gamma^{-1}\big(c(1-\lambda) V^{a}(x(t_{k(t_k)+1}))\big) \\ \nonumber
	 \leq& \frac{1}{4} \gamma^{-1}(\mu V^{a}(x(t_{k(t_k)+1}))) \\ \label{eq_lem_omega2}
	 &+ \frac{1}{4} \gamma^{-1}\big(c(1-\lambda) V^{a}(x(t_{k(t_k)+1}))\big),
	\end{align}	
where the last inequality is written based on Assumption~\ref{ass_mu} and \eqref{eq_storage}.
Note that for $\gamma$ as a $\mc{K}_\infty$-class function and $a,b \geq 0 $ we have $\gamma(a+b) \leq \gamma(2a)  + \gamma (2b)$. Using this inequality for $e(t) = x(t_{k(t_k)})-x(t)$, $t_{k(t_k)} \leq t \leq t_{k+1}$, it follows that:
	\begin{align*}
	\gamma(\|e(t)\|) &\leq \gamma(2\|x(t_{k(t_k)})\|) + \gamma(2\|x(t)\|) \\
    &\leq  (c(1-\lambda)+\mu) V^{a}(x(t_{k(t_k)+1}))+ \mu V^{a}(x(t)).
    \end{align*}
Consequently, it follows from  \eqref{eq_V3}  that
    \begin{align} \nonumber
      \dot{V}(x(t))&\leq (c(1-\lambda)+\mu) V^{a}(x(t_{k(t_k)+1})) +(\mu-\lambda) V^{a}(x(t))\\
       &\leq \omega_2 \max\{ V^{a}(x(t_{k(t_k)+1}))), V^{a}(x(t)) \},    \label{eq_Vdot_omega}
    \end{align}
for  $t_{k(t_k)+1} \leq t<t_{k+1}$. Finally, in order to solve the differential inequality \eqref{eq_Vdot_omega}, consider the differential equation $\dot{v}(t)=\omega_2 \max\{v^{a}(x(t_{k(t_k)+1}))), v^{a}(x(t)) \}$ over $t_{k(t_k)+1} \leq t<t_{k+1}$ with the
 initial condition $v(t_{k(t_k)+1})=V(t_{k(t_k)+1})$. It follows that
\begin{align*}
v(t)=\big ( v^{1-a}(t_{k(t_k)+1})+\omega_2 (t-t_{k(t_k)+1})     \big )^{\frac{1}{1-a}},
\end{align*}
and using the comparison lemma, we have $V(t)\leq v(t)$ which leads to \eqref{eq_V_DoS2}.
\end{proof}
\begin{rem}
	Lemma~\ref{lem_omega} quantifies the rate of divergence of the Lyapunov function in the sense that, in the presence of DoS and no successful feedback, the Lyapunov function may increase while the growth value is upper-bounded as in \eqref{eq_V_DoS2}.
\end{rem}

Let $\Lambda(t)$ represents the union of the time-intervals over which the Lyapunov function $V(x(t))$ may increase. Further, define $\Lambda^c(t)$ as the complement of  $\Lambda(t)$ over the time-interval $[0,t)$, i.e., $\Lambda^c(t)=[0,t)\setminus \Lambda(t)$. Following Assumptions~\ref{ass_dos1} and~\ref{ass_dos2} on the DoS frequency/duration, it follows that \cite{de2016networked}:
\begin{eqnarray} \label{eq_lambda}
|\Lambda(t)| \leq \kappa + \frac{t}{\theta} + \overline{\Delta}(\eta+\frac{t}{\tau_D}).
\end{eqnarray}
Following Theorem \ref{thm_main} and Lemma~\ref{lem_omega}, it follows from  \eqref{eq_V_DoS1} and \eqref{eq_V_DoS2} that
\begin{eqnarray} \nonumber
		V^{1-a}(x(t)) &\leq&  V_0^{1-a}\\ &+&(1-a)(\omega_2|\Lambda(t)|-\omega_1|\Lambda^c(t)|).
		\label{eq_V_lambda}
\end{eqnarray}
In the above, the DoS-related term may increase the Lyapunov function (causing instability), while the other term decreases the Lyapunov function (causing stability). The goal is to determine the conditions on the DoS  frequency/duration such that the stabilizing term is predominant and, therefore, the closed-loop system remains  FTS.

\begin{theorem} \label{thm_main}
	Consider system~\eqref{eq_syst} with input feedback  $u(t)=\psi(x(t_{k(t)}))$ under event-triggered mechanism Case (I)-(II) and Assumptions~\ref{ass_mu}, \ref{ass_dos1}, and~\ref{ass_dos2}. The system is FTS, if
\begin{eqnarray}
\frac{1}{\theta}+\frac{\overline{\Delta}}{\tau_D}<\frac{c\lambda}{c+2\mu}.
\label{eq_main}
\end{eqnarray}
   Furthermore,
\begin{eqnarray}
   \|x(t)\| \leq \alpha_1^{-1}\left(\left(\alpha_2(\|x_0\|)^{1-a}+(1-a)(\rho-\xi t) \right)^{\frac{1}{1-a}} \right),
   \label{eq_main2}
\end{eqnarray}
where,
\begin{eqnarray}	\nonumber
\xi &:=& c\lambda-\left(\frac{1}{\theta}+\frac{\overline{\Delta}}{\tau_D}\right)(c+2\mu),\\ \nonumber
\rho &:=& (c+2\mu)(\kappa+\overline{\Delta}\eta).
\end{eqnarray}
\end{theorem}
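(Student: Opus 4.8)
The plan is to take the aggregated estimate \eqref{eq_V_lambda} as the starting point, since it already packages the two per-interval bounds proved earlier: the decay on the DoS-free intervals from \eqref{eq_V_DoS1} at rate $\omega_1 = c\lambda$, and the controlled growth on the DoS intervals from Lemma \ref{lem_omega}, i.e. \eqref{eq_V_DoS2}, at rate $\omega_2 = c(1-\lambda)+2\mu$. The substantive structural step is the telescoping of these alternating estimates over $[0,t)$ that yields \eqref{eq_V_lambda}; granting that, the theorem reduces to consolidating the growth/decay budget into the single affine-in-$t$ bound \eqref{eq_main2} and then running a finite-time comparison argument.

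First I would eliminate the complement term via $|\Lambda^c(t)| = t - |\Lambda(t)|$, so that the bracket in \eqref{eq_V_lambda} becomes $\omega_2|\Lambda(t)| - \omega_1\bigl(t-|\Lambda(t)|\bigr) = (\omega_1+\omega_2)|\Lambda(t)| - \omega_1 t$. The decisive cancellation is $\omega_1 + \omega_2 = c\lambda + c(1-\lambda) + 2\mu = c + 2\mu$, precisely the constant appearing in $\rho$ and $\xi$. Substituting the DoS duration/frequency budget \eqref{eq_lambda} for $|\Lambda(t)|$ and separating the $t$-independent from the $t$-linear part gives $(\omega_1+\omega_2)|\Lambda(t)| - \omega_1 t \leq (c+2\mu)(\kappa+\overline{\Delta}\eta) + \left[(c+2\mu)\left(\frac{1}{\theta}+\frac{\overline{\Delta}}{\tau_D}\right) - c\lambda\right]t = \rho - \xi t$. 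Inserting this into \eqref{eq_V_lambda} and bounding the initial value through $V_0 = V(x_0) \leq \alpha_2(\|x_0\|)$ from \eqref{eq_storage} produces $V^{1-a}(x(t)) \leq \alpha_2(\|x_0\|)^{1-a} + (1-a)(\rho - \xi t)$; applying $\|x\| \leq \alpha_1^{-1}(V(x))$, again from \eqref{eq_storage}, then reproduces \eqref{eq_main2} verbatim.

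Finally I would read off finite-time stability. Condition \eqref{eq_main} is exactly the statement $\xi > 0$: multiplying it by $(c+2\mu)>0$ gives $\left(\frac{1}{\theta}+\frac{\overline{\Delta}}{\tau_D}\right)(c+2\mu) < c\lambda$, i.e. $\xi > 0$. With $\xi > 0$ the argument $\alpha_2(\|x_0\|)^{1-a} + (1-a)(\rho - \xi t)$ is strictly decreasing in $t$ and reaches $0$ at the finite time $T(x_0) = \frac{\alpha_2(\|x_0\|)^{1-a} + (1-a)\rho}{(1-a)\xi}$. Since $V^{1-a}(x(t)) \geq 0$, the bound \eqref{eq_main2} forces $V(x(t)) = 0$ at that time, and then $\alpha_1(\|x(t)\|) \leq V(x(t)) = 0$ with $\alpha_1 \in \mc{K}_\infty$ forces $x(t) = 0$ for all $t \geq T(x_0)$, which is exactly FTS with the displayed settling-time estimate.

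The genuinely delicate point is not the closing algebra but the provenance of \eqref{eq_lambda} and \eqref{eq_V_lambda}. In \eqref{eq_lambda} the term $\overline{\Delta}(\eta + t/\tau_D)$ is \emph{not} part of the raw DoS duration $|\Xi(t)|$; it is the spill-over padding contributed by the at-most-$\overline{\Delta}$ sampling interval following each of the $n(t)$ DoS episodes, during which no successful transmission has yet re-occurred and $V$ may still grow. The care is in verifying that $\Lambda(t)$, the set where $V$ can increase, is correctly over-approximated by $|\Xi(t)| + \overline{\Delta}\,n(t)$, and that the index shift to the first in-DoS sample $t_{k(t_k)+1}$ from Lemma \ref{lem_omega} is consistently absorbed into this padding when telescoping across the alternating intervals; once \eqref{eq_V_lambda} is secured, the remainder is the bookkeeping sketched above.
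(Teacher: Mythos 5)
Your proposal is correct and follows essentially the same route as the paper: start from \eqref{eq_V_lambda}, substitute the budget \eqref{eq_lambda} using $\omega_1+\omega_2=c+2\mu$ to obtain the affine bound $\rho-\xi t$, observe that \eqref{eq_main} is exactly $\xi>0$, and close with the sandwich $\alpha_1(\|x\|)\leq V\leq \alpha_2(\|x\|)$ to get \eqref{eq_main2}. (Your explicit settling-time $T(x_0)=\bigl(\alpha_2(\|x_0\|)^{1-a}+(1-a)\rho\bigr)/\bigl((1-a)\xi\bigr)$ is in fact the correctly solved version of the paper's bound, which drops a factor of $1-a$ in the denominator.)
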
	
\begin{proof}
First,  it follows from \eqref{eq_lambda} that:
\begin{align*}	
 	\omega_2|\Lambda(t)|&-\omega_1|\Lambda^c(t)| = (c+2\mu)(\kappa+\overline{\Delta}\eta) \\
	 &+ t\left(\left(\frac{1}{\theta}+\frac{\overline{\Delta}}{\tau_D}\right)(c+2\mu)-c\lambda\right)= \rho-\xi t.
\end{align*}	
Then, based on the condition \eqref{eq_main}, we have $\xi>0$ and it follows from \eqref{eq_V_lambda} that
\begin{align*}
V^{1-a}(x(t)) &\leq  V_0^{1-a}+(1-a)(\rho-\xi t),
\end{align*}
which guarantees finite-time stability. Moreover, the settling-time is upper-bounded as:
\begin{eqnarray} \nonumber
T(x_0) &\leq&  \frac{V_0^{1-a}+(1-a) \rho }{\xi}
\end{eqnarray}
The second part of the theorem follows from equations \eqref{eq_storage}, \eqref{eq_lambda}, and \eqref{eq_V_lambda} as,
\begin{eqnarray} \nonumber
\alpha_1(\|x\|) &\leq& V(x(t)), \\ \nonumber
V^{1-a}(x(t)) &\leq& V_0^{1-a}+(1-a)(\rho-\xi t), \\ \nonumber
V_0 &\leq& \alpha_2(\|x(0)\|),
\end{eqnarray}
and we have,
\begin{eqnarray} \nonumber
\alpha_1^{1-a}(\|x(t)\|) \leq  \alpha_2^{1-a}(\|x(0)\|)+(1-a)(\rho-\xi t),
\end{eqnarray}
which leads to \eqref{eq_main2}.
\end{proof}

\section{Simulation Study}\label{sec_example}
Consider a nonlinear system in the form:
\begin{eqnarray} \label{eq_example_system}
\dot{x}(t)=-\mbox{sgn}(x(t))|x(t)|^{\frac{1}{2}}+x(t)+u(t),
\end{eqnarray}
with the event-triggered control input $u(t)=\psi(x(t_{k(t))})=-2x(t_{k(t)})=-2(x(t)+e(t))$  and  the Lyapunov function  as $V(x(t))=x^2(t)$. Then, it follows that
\begin{align} \nonumber
\dot{V}(x(t),e(t))&=2x(t)(-\mbox{sgn}(x(t))|x(t)|^{\frac{1}{2}}-2x(t)-2e(t))\\&=-2|x(t)|^{\frac{3}{2}}-2x^2(t)-4x(t)e(t).\label{eq_Vdot}
\end{align}
Next, by using the Young's inequality for the term $x(t)e(t)$ it follows that:
\begin{eqnarray} \nonumber
\dot{V}(x(t),e(t))&\leq& -2|x(t)|^{\frac{3}{2}}-2x^2(t)+2x^2(t)+2e^2(t)\\ \nonumber
&\leq& -2|x(t)|^{\frac{3}{2}}+2e^2(t)\\ \nonumber
&\leq& -2V^{\frac{3}{4}}(t)+2e^2(t).
\end{eqnarray}
Following \eqref{eq_V3},  we have $c=2$, $a=\frac{3}{4}$, and $\gamma(r)=2r^2$ which implies that the system is FTISS from Lemma~\ref{lem_lyapunov_order}. Hence, by selecting $\lambda=0.5$, the event-triggered condition \eqref{eq_sampling1} is given as:
	\begin{align}
	t_{k+1} = \inf\{t>t_k~|~32e^2(t)>|x(t)|^{1.5}\}.	\label{Event_example}
\end{align}	
Figure \ref{no_dos} shows the state trajectory of system \eqref{eq_example_system} with the proposed event-triggered mechanism \eqref{Event_example}. As shown in this figure, the state $x(t)$ converges to zero in a finite-time and the total number events in this scenario is 40. Moreover, no data is sent through the network after $t=2.03$ which is due to the finite-time convergence property of the closed-loop system.  In comparison with the time-triggered network communication with the sampling time $T=0.02$, the number of data exchange in 5 seconds is reduced from 250 samples in the time-triggered  fashion to 40 ones in the proposed event-triggered scheme which shows a significant reduction in the communication between the plant and the controller.

\begin{figure}
	\centering
	\includegraphics[width=3in,height=1.6in]{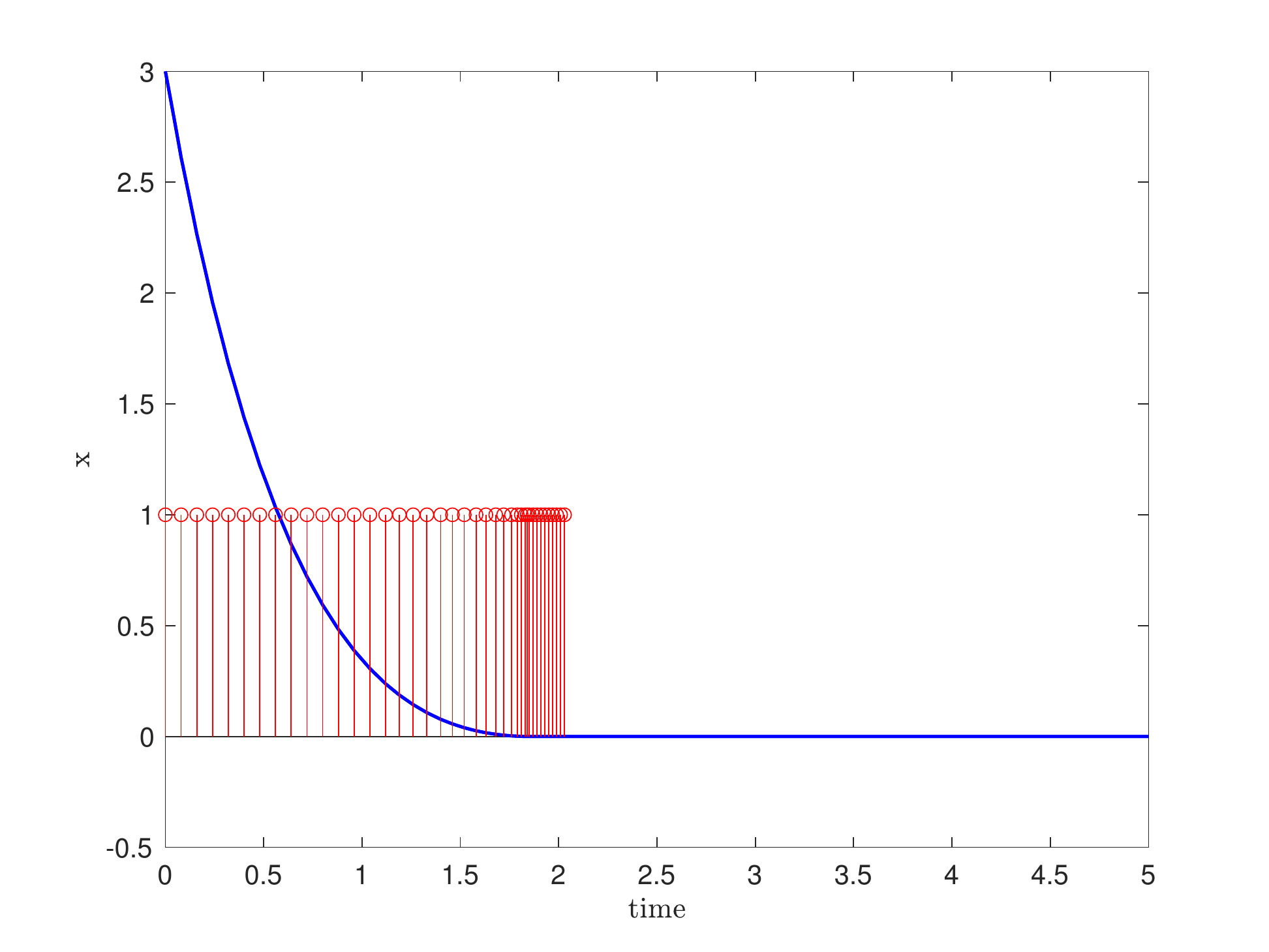}
	\caption{The state trajectory corresponding to $x(0)=3$ without DoS attack. Solid blue line: state evolution, Red bars: the event-time instant $t_k$. }
	\label{no_dos}
\end{figure}

 Next, the performance of the proposed event-triggering scheme in the presence of DoS attack is demonstrated. It follows from  \eqref{eq_storage} that $\alpha_1(r)=r^2$ and $\alpha_2(r)=3r^2$.
Therefore, for $|x|<3$, any $\mu\geq55.43$ satisfies Assumption~\ref{ass_mu} and the hybrid event-triggered mechanism \eqref{Event_example}  with $\Delta_k=\overline{\Delta}=0.1$ is selected.  Using these parameters in Theorem~\ref{thm_main}, one can find an upper-bound on frequency/duration of DoS intervals under which the system is guaranteed to remain FTS as follows:
\begin{eqnarray} \label{eq_bound}
\frac{1}{\theta}+\frac{0.1}{\tau_D}<\frac{1}{112.86} \approx 0.00886.
\end{eqnarray}
It should be noted that this bound is conservative and can in practice be larger than the theoretical one. The same observation is reported in  \cite{de2016networked} and  this is mainly due to the fact that 
 the condition in Assumption~\ref{ass_mu} bounds the derivative of the Lyapunov function irrespective of the specific nonlinear system. In other words, the bound in \eqref{eq_bound} holds for any nonlinear system for which the Lyapunov function satisfies similar condition as in Assumption~\ref{ass_mu}.

\begin{rem}
	The bound in \eqref{eq_bound} depends on the  parameters $\mu$, $\lambda$, and $\Delta_k$ which can be determined by the designer. This implies that  the designer can potentially tune the parameters to manage, for example, the convergence performance or the communication rate based on the existing resources. Further, the control input $u(t)$ can be designed for different purposes, for example, for robustness against disturbances. These provide desirable design flexibility for several implementation options.
\end{rem}

\begin{figure}
	\centering
	\includegraphics[width=3.2in,height=1.6in]{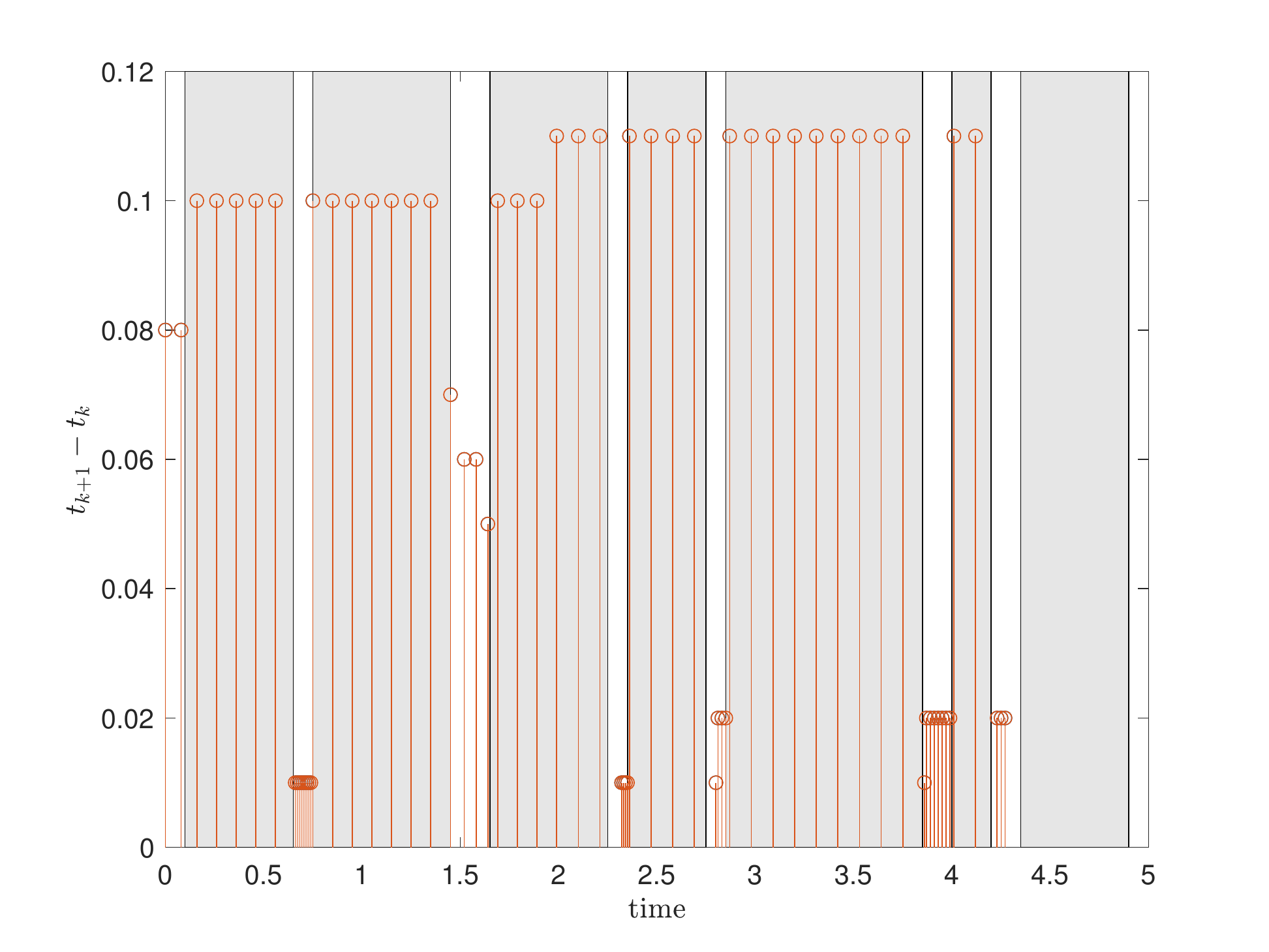}
	\caption{ The inter-event interval $t_{k+1}-t_k$  in the presence of DoS attacks. The vertical gray stripes represent the DoS time-intervals.}
	\label{fig_delta}
\end{figure}

\begin{figure}
	\centering
	\includegraphics[width=3.2in,height=1.6in]{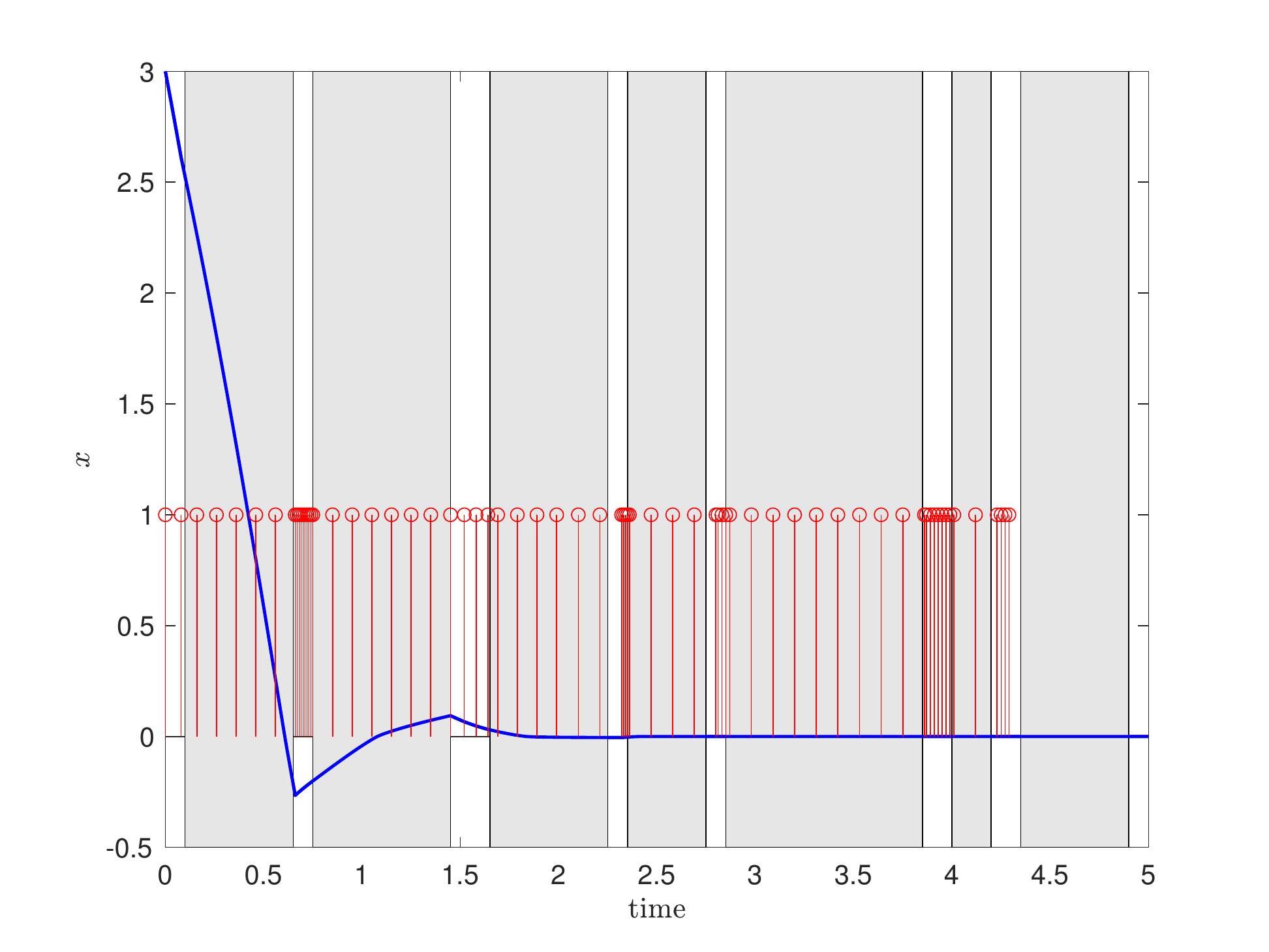}
	\caption{The state trajectory corresponding to $x(0)=3$ in the presence of  DoS attacks. Solid blue line: state evolution, Red bars: the event-time instant $t_k$. The vertical gray stripes in the background represent the DoS time-intervals. }
	\label{fig_x_dos}
\end{figure}

\begin{figure}
	\centering
	\includegraphics[width=3.2in,height=1.6in]{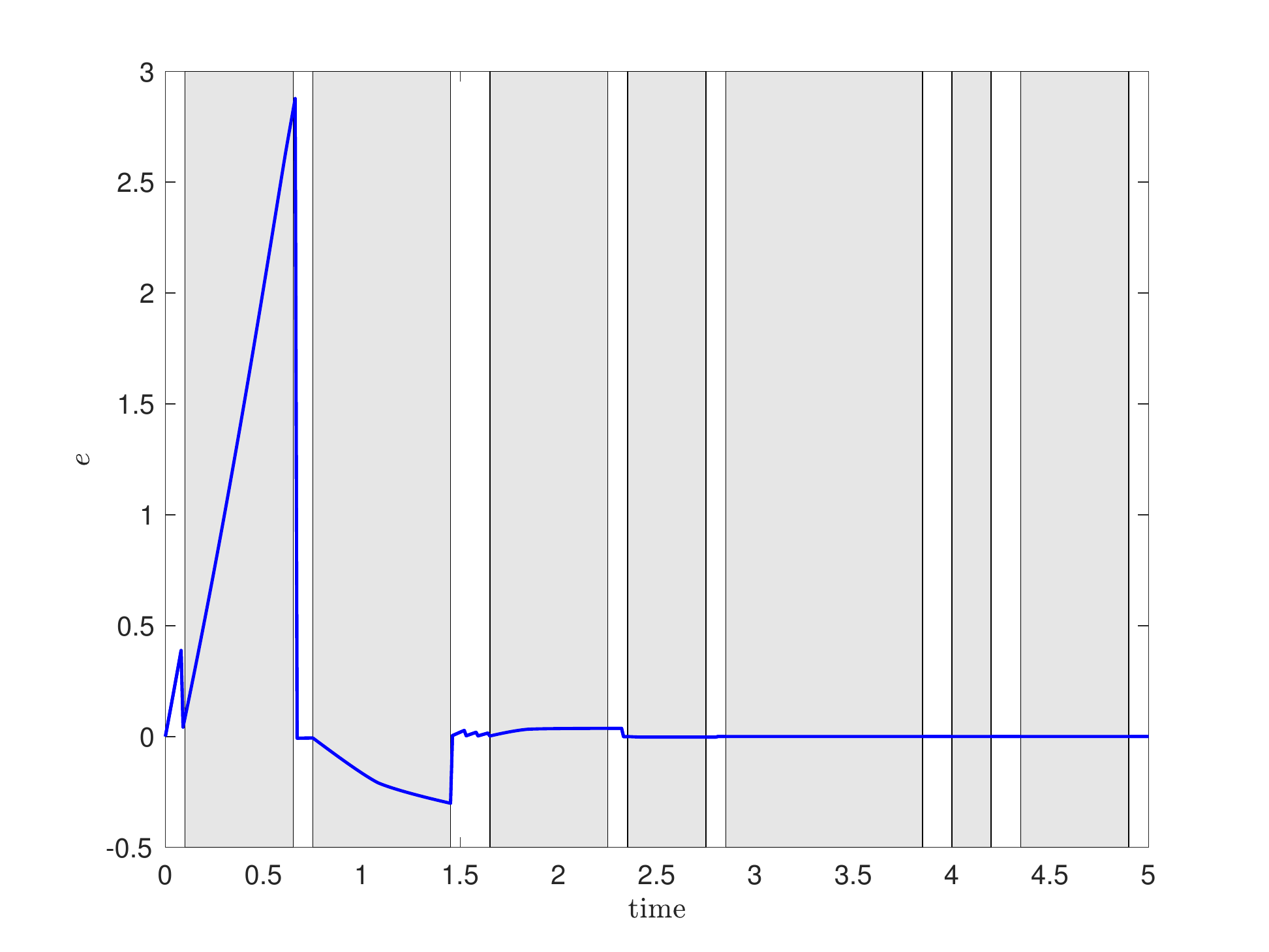}
	\caption{The error trajectory $e(t)$ corresponding to $x(0)=3$ in the presence of  DoS attacks. The vertical gray stripes represent the DoS time-intervals.}
	\label{fig_e_dos}
\end{figure}

For numerical simulation, as shown Fig. \ref{fig_delta}, we randomly generate DoS attack with the frequency and duration of $n(5)=7$ and $\Xi(5)\simeq 4$, respectively, in the time-interval $[0,5]$.
This figure particularly represents the outcome of the proposed hybrid event-triggered mechansim, where  during the off-periods of DoS (Case (I)), the event-triggered condition \eqref{Event_example} is used and  during the  DoS attack, we have ${\Delta}_k=0.1$. The average duty cycle of DoS signal is approximately $80\%$, implying $80\%$ denial of input transmissions over time (in average). Figures \ref{fig_x_dos} and \ref{fig_e_dos} show the state trajectory $x(t)$ and the corresponding error trajectory $e(t)$ of the proposed hybrid event-triggered mechanism. As shown in Figure \ref{fig_x_dos}, the state of the system converges to zero in a finite-time and the total number events in this scenario is 68 which as expected is more than the previous case corresponding to no DoS attack. However, even in the presence of DoS attack, the number of event is much less than the time-triggered communication mechanism. Figure \ref{fig_e_dos} shows the time evolution of the error signal $e(t)$ and as expected during the DoS attack, due to the denial of data exchange, the error signal  increases while after the removal of DoS, it returns back to zero.

\begin{figure}
	\centering
	\includegraphics[width=3.2in,height=1.6in]{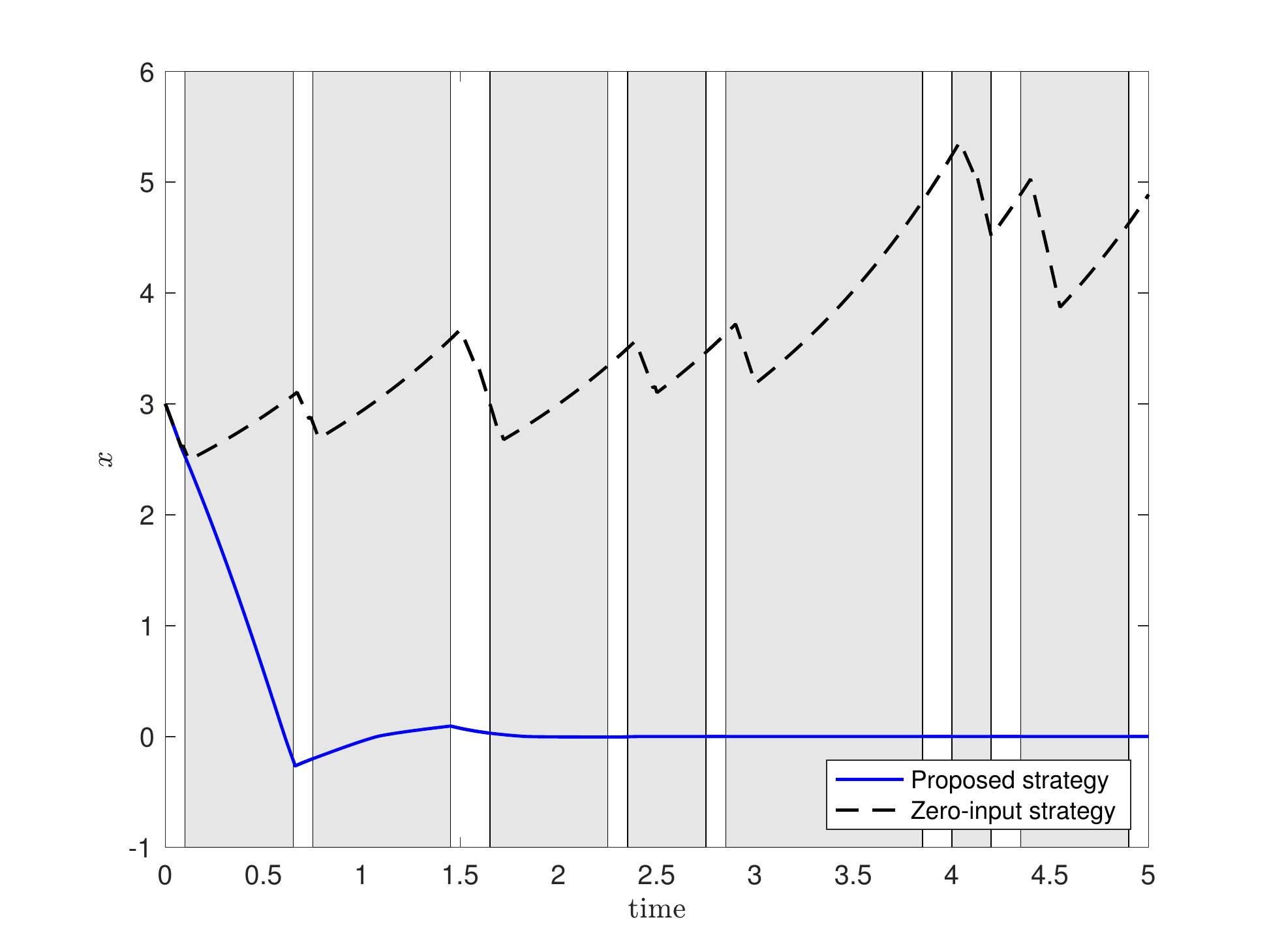}
	\caption{The state trajectory corresponding to $x(0)=3$ in the presence of  DoS attacks (represented by gray stripes). Blue solid line: proposed approach, Black dashed line:  zero-input strategy \cite{schenato2009zero}. }
	\label{fig_x_comparison}
\end{figure}
For comparison,  the zero-input strategy proposed in \cite{schenato2009zero} is also simulated and the state trajectory of our proposed approach as well as the one in \cite{schenato2009zero} are shown in Fig. \ref{fig_x_comparison}. Following the proposed strategy, during the DoS intervals, the input is constant and equal to the most recent control signal updated during no-DoS interval. A different strategy is proposed in \cite{schenato2009zero} by considering zero-input during the DoS intervals.
As it can be seen from Fig.~\ref{fig_x_comparison}, the proposed event-triggered mechanism under DoS (with  input transmission denial over $80\%$ of the simulation time) successfully stabilizes the  system~\eqref{eq_example_system} in a finite-time and it outperforms the zero-input strategy \cite{schenato2009zero} which is not necessarily stable as shown in Fig.~\ref{fig_x_comparison}. As the final comment, note that the problem of the FTS and FTISS systems under DoS is to great extent unexplored in the literature and there is no other paper specifically discussing this topic for the sake of comparison.

\section{Concluding Remarks}\label{sec_con}
In this paper, finite-time stabilizing control under denial-of-service attack on input transmissions is considered. The interest in FTS systems  has  recently been increased due to their decreased response time, for example, in robotic applications.
The practical results given on FTISS and FTS systems and related Lyapunov analysis alleviate the technicality for non-smooth feedback design. We particularly relate the duration/frequency of DoS attack to finite-time stability where  no assumption on the information available to the DoS attacker regarding the sampling logic, underlying nonlinear system, and control input is considered.
We propose an event-based sampling logic which is flexible in terms of design parameters, allowing to account for, e.g., limitations in communication resources. Note that various control methods may be applied for finite-time stability of the underlying system as our resilient sampling does not impose any constraint on the input.

It should be mentioned that, during DoS attack one can either apply the zero-input or the last updated input to the system. As discussed in the simulation section, using the last updated input outperforms the zero-input strategy as zero-input is generally destabilizing for general open-loop unstable systems. Even if the attacker is intelligent, since the underlying system is generally unstable, zero input has no better effect than using the last updated input. Note that, during DoS the controller cannot update the input therefore the zero input implies complying with the attacker which in turn result in instability. Therefore, during the DoS interval there is no better strategy than using the last updated input signal. In general, the optimality criteria for the superiority of the either of the two strategies only can be defined when the probability of DoS attacks (or lossy links, packet loss,  failed data) is known \cite{schenato2009zero}. However, in this paper as discussed in the introduction it is assumed that the DoS attacks do not follow a specific probability distribution. Therefore, no optimality criteria can be considered as the DoS attacks are unpredictable. One may consider a game-theoretic approach for the cyber defender if the intelligent attacker follows a specific pattern (or game) and this can be considered as one of the future research directions.

Future research are considered in the following directions. Self-triggering sampling methods \cite{anta2010sample} based on the prediction of system state could be applied, for example, in case of asynchronous denial of measurement and input channels. Extension to \textit{distributed} control application  \cite{de2013robust} is another promising direction of  research and finally
finite-time control methods based on the sign function and communicating single-bit of information  \cite{scientia,single_bit} under DoS is another interesting application.

\bibliographystyle{IEEEtran}
\bibliography{bibliography}

\end{document}